\pgfplotsset{compat=1.18}
\newcommand{\argmin}{\mathop{\rm argmin}}
\newcommand{\argmax}{\mathop{\rm argmax}}
\newcommand{\mnorm}[1]{{\left\vert\kern-0.25ex\left\vert\kern-0.25ex\left\vert #1 
    \right\vert\kern-0.25ex\right\vert\kern-0.25ex\right\vert}}
\newtheorem{definition}{Definition} 
\newtheorem{theorem}{Theorem}
\newcommand{\eg}{{\it e.g.}}
\newtheorem*{probdef*}{Problem Definition}
\newtheorem*{aspdef*}{Data Assumption}
\title{\LARGE \bf Actively Learning to Coordinate in Convex Games via Approximate Correlated Equilibrium  
}
\author{Zhenlong~Fang, Aryan~Deshwal, and Yue~Yu% <-this % stops a space
%\thanks{*This work was not supported by any organization}% <-this % stops a space
\thanks{Z. Fang and A. Deshwal are with the Department of Computer Science and Engineering, University of Minnesota Twin Cities, Minneapolis, MN 55455 (emails:  fang0540@umn.edu,\,adeshwal@umn.edu). Y. Yu is with the Department of Aerospace Engineering and Mechanics, University of Minnesota Twin Cities, Minneapolis, MN 55455 (email: yuey@umn.edu).
}
}
\begin{document}

\maketitle
\thispagestyle{empty}
\pagestyle{empty}

%%%%%%%%%%%%%%%%%%%%%%%%%%%%%%%%%%%%%%%%%%%%%%%%%%%%%%%%%%%%%%%%%%%%%%%%%%%%%%%%
\begin{abstract}
Correlated equilibrium generalizes Nash equilibrium by allowing a central coordinator to guide players’ actions through shared recommendations, similar to how routing apps guide drivers. We investigate how a coordinator can learn a correlated equilibrium in convex games where each player minimizes a convex cost function that depends on other players’ actions, subject to convex constraints without knowledge of the players’ cost functions. We propose a learning framework that learns an approximate correlated equilibrium by actively querying players’ regrets, \emph{i.e.}, the cost saved by deviating from the coordinator’s recommendations. We first show that a correlated equilibrium in convex games corresponds to a joint action distribution over an infinite joint action space that minimizes all players’ regrets. To make the learning problem tractable, we introduce a heuristic that selects finitely many representative joint actions by maximizing their pairwise differences. We then apply Bayesian optimization to learn a probability distribution over the selected joint actions by querying all players' regrets. The learned distribution approximates a correlated equilibrium by minimizing players' regrets. We demonstrate the proposed approach via numerical experiments on multi-user traffic assignment games in a shared transportation network.
\end{abstract}

%%%%%%%%%%%%%%%%%%%%%%%%%%%%%%%%%%%%%%%%%%%%%%%%%%%%%%%%%%%%%%%%%%%%%%%%%%%%%%%%
\section{Introduction}

Correlated equilibrium is game-theoretic solution concept that generalizes Nash equilibrium \cite{aumann1974subjectivity,aumann1987correlated,hart1989existence,cavaliere2001coordination,babichenko2014simple}. It applies to settings where players can receive recommendations from a central coordinator---such as routing apps---that help them coordinate their actions. A correlated equilibrium is a probability distribution over all players' joint actions that, when joint actions are sampled from this distribution and recommended to the players, no player has incentives to unilaterally deviate from the recommendation. By allowing players to correlate their actions via shared recommendations, correlated equilibrium can lead to outcomes that are more efficient and fair, especially when players compete for resources \cite{duffy2017coordination,wu2012cooperative}. 

Computing correlated equilibrium is challenging, as it requires full knowledge of all players' cost functions and consideration of all possible joint actions. For example, in the case of multiplayer matrix games where each player has a finite set of actions, computing a correlated equilibrium is equivalent to solve a linear program \cite{papadimitriou2008computing,wu2012cooperative}. In this linear program, the number of variables equals the total number of possible joint actions of all players, and the constraints depend on each player's cost functions. As the number of players and actions grows, solving this linear program becomes computationally intractable. Consequently, some results propose approximating the correlated equilibrium conditions in graphical games, leading to near-equilibrium solutions in randomly generated games \cite{kamisetty2011approximating}. Recent results further showed that one can compute special cases of correlated equilibria in multiplayer matrix games by computing multiple distinct Nash equilibria, which is more efficient than solving the linear program as the problem size grows \cite{im2024coordination}. 

Learning-based methods provide a decentralized approach to computing correlated equilibria \cite{hart2000simple,foster1997calibrated}. These methods design decision rules that guide players to update their actions in repeated games, using the observed costs of past actions. The goal is for the empirical distribution of the players' joint actions to converge to a correlated equilibrium. For example, the regret matching method adjusts each player's action based on the regret of not having chosen other actions in the past \cite{hart2000simple,wu2012cooperative}. While regret matching ensures convergence to a correlated equilibrium, it does not necessarily lead to an efficient correlated equilibrium that minimize players’ total costs. To address this limitation, later approaches proposed trial-and-error learning, where players experiment with new actions depending on whether they are content with baseline actions \cite{borowski2014learning,marden2017selecting}. Recent work explores similar ideas but allowing players to update their actions using only a limited memory of past interactions \cite{arifovic2019learning}. 

A common assumption in existing learning methods for correlated equilibrium is that each player independently adjusts their actions without external coordination \cite{foster1997calibrated,hart2000simple,borowski2014learning,marden2017selecting,arifovic2019learning}. While this feature supports decentralized learning, many real-world systems naturally include centralized coordination---such as routing apps---that guide players’ decisions. As a result, while existing methods allow players to independently adapt toward a correlated equilibrium, they offer limited insight into \emph{how a coordinator can efficiently learn such an equilibrium from players’ feedback}. Addressing this question is critical for designing centralized coordination mechanisms---such as routing apps that learn from user feedback---that relieve players of the burden of adjusting their actions through uncoordinated interactions and promote efficient joint decision-making..

Another limitation of the existing methods, for both computing \cite{papadimitriou2008computing,kamisetty2011approximating,wu2012cooperative,im2024coordination} and learning correlated equilibrium \cite{foster1997calibrated,hart2000simple,borowski2014learning,marden2017selecting,arifovic2019learning}, is that they focus on games with finite action sets. In practice, many common decision-making models---such as Markov decision processes, shortest path problems, and linear quadratic regulators---represent each player's decision-making as an optimization over a continuous action space. Consequently, learning a correlated equilibrium requires new frameworks to handle the challenges posed by infinitely many joint actions.

We propose a active learning framework in which a coordinator actively learns an approximate correlated equilibrium in convex games. In these games, each player minimizes a cost function that is convex in its own decision variables, subject to convex constraints. To make the learning over infinitely many joint actions tractable, we introduce the notion of approximate correlated equilibria---joint action distributions with finite support over a set of basis joint actions. The coordinator efficiently learns such an equilibrium by interacting with the players through a sequence of queries. In each query, it presents a joint action distribution and collects the players’ \emph{correlated regrets}, which measure the cost players can save by unilaterally deviating from recommended actions.

To this end, we develop algorithmic tools for selecting and updating joint action distributions that support active and query-efficient learning, and evaluate their effectiveness in coordinating traffic assignments. We first propose a Bayesian optimization method that adjusts the weights on the basis actions by querying players' regrets, balancing exploration and exploitation. To identify a representative set of basis actions, we propose a heuristic that selects joint actions by maximizing their pairwise differences. Finally, we demonstrate the proposed learning framework in multi-user traffic assignment games.

\section{Correlated Equilibrium in Convex Games}
We begin our discussions with convex games and correlated equilibrium. We further introduce some novel concepts such as correlated regret and theor variations. These concepts underpin our later discussions.
\subsection{Convex games}
We consider a game with \(m\in\mathbb{N}\) players. Each player's goal is to minimize a cost function, whose value depends on the other players' actions, subject to convex constraints. We let \(\mathbb{X}_i\subset\mathbb{R}^{n_i}\) denote the set of actions for player \(i\), and 
\begin{equation}
    \mathbb{X}\coloneqq \prod_{i=1}^m \mathbb{X}_i, \enskip \mathbb{X}_{-i}\coloneqq \prod_{j=1, j\neq i}^m \mathbb{X}_j
\end{equation}
denote the set of joint actions of all players and the set of actions of players other than player \(i\), respectively. We let 
\begin{equation}
    f_i:\mathbb{X}_i\times \mathbb{X}_{-i}\to\mathbb{R}
\end{equation}
denote the cost function for player \(i\). In particular, \(f_i(x_i, x_{-i})\) denote the cost of action \(x_i\in\mathbb{X}_i\) for player \(i\), assuming other players take action \(x_{-i}\in\mathbb{X}_{-i}\). With these notations, we define a convex game as follows.

\begin{definition}[Convex Games]
    A tuple \(\{\mathbb{X}_i,f_i\}_{i=1}^m\) defines a convex game with \(m\) players (\(m\in\mathbb{N}\)) if, for each player \(i\), \(\mathbb{X}_i\in\mathbb{R}^n\) is a closed convex set, and \(f_i(x_i, x_{-i})\) is a convex function of \(x_i\) for any \(x_{-i}\in\mathbb{X}_{-i}\).  
\end{definition}

Convex games model a wide range of noncooperative multiplayer decision-making problems. For example, when each player's decision constraints arise from structured problems such as shortest path problems, linear quadratic regulator problems, or Markov decision processes, their interactions can be modeled as a convex game, as long as each player's cost function is convex in its own decision.

\subsection{Correlated equilibrium}
Correlated equilibrium generalizes the Nash equilibrium via recommendations from a coordinator. A joint action distribution is a correlated equilibrium if no player can reduce its expected cost by unilaterally deviating from the recommended action, assuming others follow the recommendations. Below, we define the correlated equilibrium in a convex game.

\begin{definition}[Correlated Equilibrium]\label{def: CE}
A probability measure \(\mu:\mathbb{X}\to\mathbb{R}_{\geq 0}\) is a \emph{correlated equilibrium} of convex game \(\{\mathbb{X}_i,f_i\}_{i=1}^m\) if, for each player \(i\) and any measurable function \(\sigma_i \colon \mathbb{X}_i \to \mathbb{X}_i\), the following condition holds:
\begin{equation}\label{eqn: CE general}
    \int_{\mathbb{X}} \left( f_i(x_i,x_{-i})-f_i\bigl(\sigma_i(x_i),x_{-i}\bigr) \right)\,d\mu(x_i,x_{-i}) \leq 0.
\end{equation}
\end{definition}

Notice that directly verifying whether a probability measure \(\mu\) is a correlated equilibrium using Definition~\ref{def: CE} requires verifying infinitely many instances of inequality \eqref{eqn: CE general}, one for each measurable function per player. To develop a computationally tractable test for correlated equilibrium, we now introduce the notion of \emph{correlated regret}.   
\begin{definition}[Correlated Regret]
    Let \(\{\mathbb{X}_i,f_i\}_{i=1}^m\) be a convex game with \(m\in\mathbb{N}\) players. Let \(\mu:\mathbb{X}\to\mathbb{R}_{\geq 0}\) be a probability measure on \(\mathbb{X}\). The \emph{correlated regret} of player \(i\) associated with \(\mu\), denoted by \(r_i(\mu)\in\mathbb{R}\), is defined as follows:
    \begin{equation}\label{eqn: correlated regret}
    \begin{aligned}
        r_i(\mu) \coloneqq &\int_{\mathbb{X}} f_i(x_i,x_{-i}) \,d\mu(x_i,x_{-i})\\
        &-\underset{y_i\in\mathbb{X}_i}{\min}\int_{\mathbb{X}}f_i(y_i,x_{-i}) \,d\mu(x_i,x_{-i}). 
    \end{aligned}
    \end{equation}
\end{definition}

Intuitively, the correlated regret measures the maximum amount of cost that could be saved by not following the actions recommended according to probability measure \(\mu\). 

The following theorem shows that one can verify a joint action distribution \(\mu\) is a correlated equilibrium by evaluating each player's correlated regret. 
\begin{theorem}\label{thm: equivalence-cost}
Let \(\{\mathbb{X}_i,f_i\}_{i=1}^m\) be a convex game with \(m\in\mathbb{N}\) players. A probability measure \(\mu:\mathbb{X}\to\mathbb{R}_{\geq 0}\) is a correlated equilibrium of convex game \(\{\mathbb{X}_i,f_i\}_{i=1}^m\) if and only if \(r_i(\mu)\leq 0\) for all \(i=1, 2, \ldots, m\).
\end{theorem}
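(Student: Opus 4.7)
The plan is to prove each direction of the equivalence separately. For the ``only if'' direction, I specialize the correlated equilibrium condition to constant deviations: given a CE $\mu$, I apply Definition~\ref{def: CE} with the (measurable) constant map $\sigma_i(x_i)\equiv y_i$ for an arbitrary $y_i \in \mathbb{X}_i$. Then \eqref{eqn: CE general} collapses to $\int_\mathbb{X} f_i(x_i, x_{-i})\,d\mu \le \int_\mathbb{X} f_i(y_i, x_{-i})\,d\mu$, and since this holds for every $y_i \in \mathbb{X}_i$, taking the infimum over $y_i$ on the right-hand side yields $r_i(\mu)\le 0$ directly.

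For the ``if'' direction, I assume $r_i(\mu)\le 0$ for every player $i$ and aim to upgrade the constant-deviation bound into \eqref{eqn: CE general} for every measurable $\sigma_i\colon \mathbb{X}_i \to \mathbb{X}_i$. The key tool I plan to use is the convexity of $f_i(\cdot, x_{-i})$ together with Jensen's inequality. Concretely, to any measurable deviation $\sigma_i$ I will associate an averaged action $y_i^\star \coloneqq \int_\mathbb{X} \sigma_i(x_i)\,d\mu(x_i, x_{-i})$, which belongs to $\mathbb{X}_i$ because $\mathbb{X}_i$ is closed and convex. The idea is then to use Jensen's inequality on the convex function $f_i(\cdot, x_{-i})$ to lower bound $\int f_i(\sigma_i(x_i), x_{-i})\,d\mu$ by a quantity controlled by $\int f_i(y_i^\star, x_{-i})\,d\mu$, and finally to invoke $r_i(\mu) \le 0$ at $y_i = y_i^\star$ to close the chain.

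The hard part will be matching the averaging performed by Jensen's inequality against the unconditional averaging that appears in the definition of $r_i(\mu)$. Disintegrating $\mu$ over the marginal $\mu_{-i}$ and applying Jensen to each conditional law $\mu(\,\cdot\mid x_{-i})$ naturally produces an $x_{-i}$-dependent replacement for $\sigma_i$, whereas the correlated regret only certifies optimality against $x_{-i}$-independent constant deviations. Reconciling these two averagings---so that the convexity of each $f_i$ in its own argument collapses the state-dependent bound down to the constant-deviation bound at $y_i^\star$---is the crux of the argument and the step I expect to require the most care.
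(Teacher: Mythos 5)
Your ``only if'' direction is exactly the paper's argument and is correct: constant maps $\sigma_i\equiv y_i$ are measurable, and taking the infimum over $y_i$ in the resulting inequality gives $r_i(\mu)\le 0$. For the ``if'' direction you have put your finger on precisely the right difficulty, but the plan as described does not close it---and in fact it cannot be closed. Conditional Jensen (disintegrating over $x_{-i}$) produces the $x_{-i}$-dependent action $\bar x_i(x_{-i}) = \mathbb{E}_{\mu(\cdot\mid x_{-i})}[\sigma_i(x_i)]$ and yields $\int f_i(\bar x_i(x_{-i}),x_{-i})\,d\mu \le \int f_i(\sigma_i(x_i),x_{-i})\,d\mu$; but $r_i(\mu)\le 0$ only certifies optimality of $\mu$ against \emph{constant} deviations $y_i$, and the best constant can incur strictly higher cost than an $x_{-i}$-adapted selection. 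Unconditional Jensen (your $y_i^\star=\int\sigma_i(x_i)\,d\mu$) gives a genuine constant, but the resulting bound integrates $x_i$ and $x_{-i}$ against the product of the marginals rather than against $\mu$, so it does not control $\int f_i(\sigma_i(x_i),x_{-i})\,d\mu$ unless $\mu$ is a product measure. The paper's own proof conceals the same gap: the second inequality of \eqref{eqn: pf eqn2}, justified ``by the definition of minimum,'' compares $\min_{y_i}\int f_i(y_i,x_{-i})\,d\mu$ with the value at the non-constant selection $\bar x_i(x_{-i})$, which the definition of minimum does not license.

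The gap is not repairable because the ``if'' direction is false as stated. Take $m=2$, $\mathbb{X}_1=\mathbb{X}_2=[-1,1]$, $f_1(x_1,x_2)=(x_1-x_2)^2$, $f_2\equiv 0$, and let $\mu$ put mass $1/4$ on each of $(1,1)$, $(1,0)$, $(-1,0)$, $(-1,-1)$. Then $\int f_1\,d\mu = 1/2$ while $\int f_1(y,x_2)\,d\mu = y^2+1/2$ is minimized at $y=0$ with value $1/2$, so $r_1(\mu)=r_2(\mu)=0$. Yet the measurable deviation $\sigma_1(x_1)=x_1/2$ gives $\int f_1(\sigma_1(x_1),x_2)\,d\mu = 1/4$, so \eqref{eqn: CE general} fails for player $1$. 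What $r_i(\mu)\le 0$ for all $i$ actually characterizes is a \emph{coarse} correlated equilibrium (no profitable deviation to a fixed action); the equivalence with Definition~\ref{def: CE} requires extra structure, for instance that $\mu$ be the product of its marginals, or that the deviations in Definition~\ref{def: CE} be restricted to constants. So your proposal is incomplete exactly at the step you flagged, and no amount of care there will reconcile the two averagings.
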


\begin{proof}
Suppose that \eqref{eqn: CE general} holds for any measurable function \(\sigma_i\colon \mathbb{X}_i \to \mathbb{X}_i\). Letting \(y_i\in \mathbb{X}_i\) and \(\sigma_i(x_i) \equiv y_i\) in \eqref{eqn: CE general} gives
\begin{equation}
     \int_{\mathbb{X}} (f_i(x_i,x_{-i})-f_i(y_i,x_{-i})) \,d\mu(x_i,x_{-i})\leq 0.
\end{equation}
Since the above condition holds for any \(y_i\in\mathbb{X}_i\), we conclude that \(r_i(\mu)\leq 0\).

On the other hand, suppose that \(r_i(\mu)\leq 0\). let \(\sigma_i\colon \mathbb{X}_i\to \mathbb{X}_i\) be an arbitrary measurable function. For each fixed \(x_{-i}\in \mathbb{X}_{-i}\), let \(\mu(\cdot\mid x_{-i})\) denote the conditional probability measure on \(\mathbb{X}_i\) induced by \(\mu\). Let \(\bar{x}_i(x_{-i})\) be the conditional expectation such that
\[
\bar{x}_i(x_{-i}) := \mathbb{E}_{\mu(\cdot\mid x_{-i})}\bigl[\sigma_i(x_i)\bigr] = \int_{\mathbb{X}_i} \sigma_i(x_i)\,d\mu(x_i\mid x_{-i}).
\]
Since \(\mathbb{X}_i\) is closed and convex, we have that \(\bar{x}_i(x_{-i}) \in \mathbb{X}_i\) for each \(x_{-i}\).
Furthermore, since \(f_i(\cdot,x_{-i})\) is convex, by using Jensen's inequality we can show, for each fixed \(x_{-i}\),
\begin{equation}
    \begin{aligned}
    &\int_{\mathbb{X}_i} f_i\Bigl(\bar{x}_i(x_{-i}),x_{-i}\Bigr) \,d\mu(x_i\mid x_{-i})\\
        &=f_i\Bigl(\bar{x}_i(x_{-i}),x_{-i}\Bigr) \leq \int_{\mathbb{X}_i} f_i\bigl(\sigma_i(x_i),x_{-i}\bigr)\,d\mu(x_i\mid x_{-i}).
    \end{aligned}
\end{equation}
Integrating both sides with respect to the marginal measure of \(\mu\) on \(\mathbb{X}_{-i}\), we obtain
\begin{equation}\label{eqn: pf eqn1}
\begin{aligned}
   & \int_{\mathbb{X}} f_i\Bigl(\bar{x}_i(x_{-i}),x_{-i}\Bigr)\,d\mu(x_i,x_{-i})
\\
&\le \int_{\mathbb{X}} f_i\bigl(\sigma_i(x_i),x_{-i}\bigr)\,d\mu(x_i,x_{-i}).
\end{aligned}  
\end{equation}
Since \(r_i(\mu)\leq 0\), we have
\begin{equation}\label{eqn: pf eqn2}
    \begin{aligned}
       & \int_{\mathbb{X}} f_i(x_i,x_{-i})\,d\mu(x_i,x_{-i})\\
&\leq \underset{y_i\in\mathbb{X}_i}{\min}\int_{\mathbb{X}} f_i(y_i,x_{-i}) \,d\mu(x_i,x_{-i})\\
&\leq  \int_{\mathbb{X}} f_i\Bigl(\bar{x}_i(x_{-i}),x_{-i}\Bigr)\,d\mu(x_i,x_{-i}).
    \end{aligned}
\end{equation}
Here, the second step in \eqref{eqn: pf eqn2} is due to the definition of minimum. Since the choice of \(\sigma_i\) is arbitrary,  \eqref{eqn: pf eqn1} and \eqref{eqn: pf eqn2} together implies \eqref{eqn: CE general}.
\end{proof}

\subsection{Correlated regret and approximate correlated equilibrium}
Since set \(\mathbb{X}\) contains infinitely many joint actions, it takes infinitely many parameters to characterize a general correlated equilibrium (a probably measure over set \(\mathbb{X}\)), which is intractable for computation and learning. To address this challenge, we propose to focus on the cases where \(\mu\) is the weighted sum of finitely many Dirac functions. In particular, let \(\hat{x}^1, \hat{x}^2, \ldots, \hat{x}^N\in\mathbb{X}\), denote a set of \emph{basis joint actions}. For \(k=1, 2, \ldots, N\), let
\begin{equation}
    \delta_k(x)\coloneqq \begin{cases}
    1, & x=\hat{x}^k,\\
    0, & \text{otherwise,}
    \end{cases}
\end{equation}
denote the Dirac function associated with action \(\hat{x}^k\in\mathbb{X}\). We consider joint action distributions of the following form
\begin{equation}\label{eqn: pdf approx}
    \mu(x)= \sum_{k=1}^N [w]_k \delta_k(x), 
\end{equation}
where \(w\in\Delta_N\coloneqq \{u\in\mathbb{R}_{\geq 0}^N| u^\top \mathbf{1}_N=1\}\), and \([w]_k\) is the \(k\)-the element of vector \(w\). When combined with \eqref{eqn: pdf approx}, the integrals in \eqref{eqn: correlated regret} becomes summation in this case. In particular, the correlated regret reduces to the folowing form
\begin{equation}\label{eqn: approx regret}
     \hat{r}_i(w)\coloneqq  \sum_{k=1}^N [w]_k f_i(\hat{x}_i^k, \hat{x}_{-i}^k)-\underset{y_i\in\mathbb{X}_i}{\min} \sum_{k=1}^N [w]_k f_i(y_i, \hat{x}_{-i}^k).
\end{equation}
In other words, \(\hat{r}_i(w)=r_i(\mu(x))\) where \(\mu(x)\) is given by \eqref{eqn: pdf approx}. 

Unlike the general case, the joint action distributions of the form \eqref{eqn: pdf approx} only requires finitely many parameters to describe, assuming that the set of basis joint actions is known. This parameterization allows efficient computation, but may no longer satisfy the conditions in Theorem~\ref{thm: equivalence-cost} exactly. Therefore, we introduce the following notion of \emph{approximate correlated equilibrium}.
 
\begin{definition}[Approximate Correlated Equilibrium]\label{def: approx CE}
  Consider a \(m\)-player convex game \(\{\mathbb{X}_i,f_i\}_{i=1}^m\) and basis actions \(\{\hat{x}^k\}_{k=1}^N\) where \(\hat{x}^k\in\mathbb{X}\) for all \(k=1, 2, \ldots, N\). For any \(w\in\Delta_N\), we let \(\hat{r}_{\text{avg}}(w)\coloneqq \frac{1}{m} \sum_{i=1}^m \hat{r}_i(w)\),
  where \(\hat{r}_i\)is given by \eqref{eqn: approx regret}. We say \(\mu(x)=\sum_{k=1}^N [w^\star]_k \delta_k(x)\) is an approximate correlated equilibrium  associated with basis actions \(\{\hat{x}^k\}_{k=1}^N\) if 
  \begin{equation}\label{eqn: approx CE conditions}
      w^\star\in\underset{w\in\Delta_N}{\argmin} \enskip\hat{r}_{\text{avg}}(w).
  \end{equation}
\end{definition}
Instead of the conditions in Theorem~\ref{thm: equivalence-cost} that require all regrets to be nonnegative, the conditions in \eqref{eqn: approx CE conditions} only require that the average value of \(\hat{r}_i(w)\) is minimized. Since the definition of \(\hat{r}_i(w)\) in \eqref{eqn: approx regret} implies that \(\hat{r}_i(w)\) is always nonnegative, the conditions in \eqref{eqn: approx CE conditions} implies that \(\hat{r}_i(w)\) is close to zero on average, which approximates the conditions in Theorem~\ref{thm: equivalence-cost}.  
   
We demonstrate the convergence of the players' regrets when solving the  

\section{Actively Learning Approximate Correlated Equilibrium via Correlated Regret}
\label{sec: active learning}
We now develop an active learning framework for computing approximate correlated equilibria based on correlated regrets. This framework combines Bayesian optimization with a heuristics method to identify a representative set of basis actions, as we will discuss below.  

\subsection{Online learning of approximate correlated equilibrium}
\label{subsec: problem formulation}

We consider the problem of learning a approximated correlated equilibrium based on players’ approximate correlated regret. For a given set of basis actions \(\{\hat{x}^k\}_{k=1}^N\) and a weighting vector \(w\in\Delta_N\), the correlated regret defined in \eqref{eqn: approx regret} quantifies the cost improvement that player \(i\) could achieve by unilaterally deviating from the action recommended according to \(\mu(x)\), which is defined in \eqref{eqn: pdf approx}. Our goal is to design a query-efficient learning procedure that estimates \(w^\star\) that satisfies \eqref{def: approx CE} by iteratively refining the joint distribution based on reported regrets. We summarize this problem below.

\begin{probdef*}
Let \(\{\hat{x}^k\}_{k=1}^N\) be a set of basis joints actions such that \(\hat{x}^k\in\mathbb{X}\) for all \(k=1, 2, \ldots, N\). Suppose that there exists an oracle such that, for any weight vector \(w\in\Delta_N\), returns the average correlated regret \(\hat{r}_{\text{avg}}(w)\). The goal is to design a sequential querying procedure that, in the \((n+1)\)-th round of query (\(n\in\mathbb{N}\)), selects \(w^n\in\Delta_N\) based on the history \(\mathcal{D}^n\coloneqq \{w^l, \hat{r}_{\text{avg}}(w^l)\}_{l=1}^n\), queries the oracle to return the average correlated regret \(\hat{r}_{\text{avg}}(w^n)\), and returns an estimate of \(w^\star\in\Delta_N\), such that \(\mu(x)=\sum_{k=1}^N [w^\star]_k \delta_k(x)\) is an approximate correlated equilibrium introduced in Definition~\ref{def: approx CE}.
\end{probdef*}

Learning from reported correlated regret is reasonable in settings where a central coordinator can propose joint action distributions and receive players' feedback. In multiplayer systems, each player can locally evaluate the cost reduction it could achieve by deviating from the recommended actions. Reporting this cost reduction allows the coordinator to assess the quality of its current recommendations without knowledge of players' cost functions. Since this feedback is evaluated locally and communicated privately to the coordinator---rather than to other potentially noncooperative players---it provides a lightweight, privacy-preserving mechanism for learning approximate correlated equilibria.
\subsection{Actively learning equilibrium via Bayesian optimization}
\label{subsec: Bayesian opt}
We propose a Bayesian optimization framework  \cite{shahriari2015taking} for solving the problem formulated in Section~\ref{subsec: problem formulation}. Bayesian optimization is a sequential strategy for optimizing black-box functions, particularly effective in reducing the number of objective function evaluations. This feature is important in our problem setting, since each evaluation requires querying all players' regrets for a given weight vector \(w\). %At each iteration, Bayesian optimization 
This is achieved through two core components: a probabilistic surrogate model to capture our beliefs about the objective function, and an acquisition function to select the next query point by balancing exploration and exploitation. 
%Next, we discuss how to use Bayesian optimization to learn an approximate correlated equilibrium. 

When applied to the problem formulated in Section~\ref{subsec: problem formulation}, the key idea of Bayesian optimization is to construct a probabilistic surrogate model that represents our belief about how different weight vector choices (\(w\)) influence the correlated regrets. We leverage this probabilistic model to define an acquisition function that quantifies the utility of evaluating the objective at different weight vectors by balancing {\em exploitation} of regions where the surrogate predicts low regret values against {\em exploration} of regions with high predictive uncertainty. Concretely, we build a Gaussian Process model \cite{williams1995gaussian} to predict the mean and variance of the approximate correlated regret defined in Definition~\ref{def: approx CE} as a function of the weight vector as inputs. %The key idea of Bayesian optimization 
A Gaussian Process model is characterized by the choice of a positive definite kernel function \(\kappa:\Delta_N\times \Delta_N\to\mathbb{R}\). After \(n\) queries, we have a dataset of observations \(\mathcal{D}^n\coloneqq \{w^l, \hat{r}_{\text{avg}}(w^l)\}_{l=1}^n\).  The GP model uses this data to form a posterior predictive distribution for the regret at any candidate weight vector $w$. This posterior is a Gaussian distribution, $p(\hat{r}_{\text{avg}}(w) | \mathcal{D}_n) = \mathcal{N}(\theta_n(w), \rho_n(w)^2)$, where $\theta_n(w)$ is the predicted mean and $\rho_n(w)^2$ is the predictive variance, which quantifies our uncertainty. The posterior mean and variance are as follows:
\begin{equation}
    \begin{aligned} 
    \theta_n(w) &:= c_n(w)^\top (K_n + \sigma^2 I_n)^{-1} \eta_n, \\
    \rho_n(w)^2 &:= \kappa(w, w) - c_n(w)^\top (K_n + \sigma^2 I_n)^{-1} c_n(w), \label{eqn: GP mean & var}
\end{aligned}
\end{equation}
where:
\begin{itemize}
    \item $\eta_n := [\hat{r}_{\text{avg}}(w^1), \hat{r}_{\text{avg}}(w^2), \dots, \hat{r}_{\text{avg}}(w^n)]^\top$ is the vector of observed regret values.
    \item $K_n \in \mathbb{R}^{n \times n}$ is the kernel matrix computed from the observed inputs, with $[K_n]_{ij} = \kappa(w^i, w^j)$.
    \item $c_n(w) := [\kappa(w^1, w), \kappa(w^2, w), \dots, \kappa(w^n, w)]^\top$ is the vector of covariances between the candidate point $w$ and the observed points.
    \item $\sigma\in\mathbb{R}_{>0}$ is a model hyperparameter.
\end{itemize}

% and the kernel defines a positive semidefinite covariance matrix \(K_n\in\mathbb{R}^{n\times n}\) such that
% \begin{equation}
%     K_n\succeq 0, \enskip [K_n]_{ij}=\kappa(w^i, w^j),
% \end{equation}
% for all \(i, j=1, 2, \ldots, n\), where \([K_n]_{ij}\) is the \((i, j)\)-th entry in matrix \(K_n\). Based on this kernel function, for any  \(w\in\Delta_N\), we let 
% \begin{equation}
% \begin{aligned}
%     \eta_n\coloneqq &\begin{bmatrix}
%         \hat{r}_{\text{avg}}(w^1) & \hat{r}_{\text{avg}}(w^2) & \cdots & \hat{r}_{\text{avg}}(w^n)
%     \end{bmatrix}^\top,\\
%     c_n(w)\coloneqq &\begin{bmatrix}
%         \kappa(w^1, w) & \kappa(w^2, w) & \cdots & \kappa(w^n, w) 
%     \end{bmatrix}.
% \end{aligned}  
% \end{equation}
% Based on these definition, Gaussian Process model predicts the value of \(\hat{r}_{\text{avg}}(w)\) evaluated at a candidate weight vector $w$ using a Gaussian distribution \(\mathcal{N}(\theta_n, \rho_n(w)^2)\), where
% \begin{equation}\label{eqn: GP mean & var}
%     \begin{aligned}
%         \theta_n(w) &\coloneqq c_n(w)^\top (K_n+\sigma^2 I_n)^{-1}\eta_n,\\
%         \rho_n(w) &\coloneqq \sqrt{\kappa(w, w)-c_n(w)^\top (K_n+\sigma^2I_n)^{-1}c_n(w)},
%     \end{aligned}
% \end{equation}
% and \(\sigma\in\mathbb{R}_{>0}\) is a model parameter. 

The GP model, with its estimates of both mean \(\theta_n(w)\)  and uncertainty  \(\rho_n(w)\), provides the foundation for the acquisition function to intelligently guide the search. Concretely, equipped with this prediction, Bayesian optimization chooses the next query point \(w^{n+1}\in\Delta_N\) as follows
\begin{equation}
    w^{n+1}\in\underset{w\in\Delta_N}{\argmax}\enskip \psi(\theta_n(w), \rho_n(w))
\end{equation}
%where \(\theta_n(w)\) and \(\rho_n(w)\) are given by \ref{eqn:GPmean}, 
function \(\psi:\mathbb{R}\times \mathbb{R}\to\mathbb{R}\) is an \emph{acquisition function}. %The acquisition function translates the exploration-exploitation trade-off into a quantifiable utility. 
Concretely, we use Expected Improvement \cite{bull2011convergence} as acquisition function $\psi$ that maps the mean and variance of the Gaussian Process to the expected amount by which evaluating \(w^{n+1}\) would improve upon the current best observed regret value.  By iteratively maximizing the acquisition function and updating the GP with the new observation, the algorithm efficiently navigates the search space to find a weight vector $w^*$ that minimizes the average correlated regret.
\subsection{Identifying basis actions via maximizing differences}
\label{subsec: max diff}
The quality of the approximation in \eqref{eqn: pdf approx} depends heavily on the choice of basis actions \(\{\hat{x}^k\}_{k=1}^N\). Intuitively, we want these basis actions to be as diverse as possible, such that they provide representative samples of all possible joint actions. To this end,
we propose the following optimization problem to identify a diverse set of joint actions:
\begin{equation}\label{opt: max diff}
    \begin{array}{ll}
       \underset{\{\hat{x}^k\}_{k=1}^N}{\mbox{maximize}}  & \underset{(i, j)\in\mathbb{P}}{\min} \psi(\hat{x}^i-\hat{x}^j)  \\
      \mbox{subject to}   & \hat{x}^k\in\mathbb{X}, \enskip k=1, 2, \ldots, N, 
    \end{array}
\end{equation} 
where \(\psi:\mathbb{R}^{\sum_{i=1}^m n_i}\to\mathbb{R}_{\geq 0}\) is a nonnegative-valued convex  function that measures differences (\eg, \(\ell_p\) norms), and 
\begin{equation}
    \mathbb{P}\coloneqq \{(i, j)| 1\leq i, j\leq N, i\neq j\}.
\end{equation}
The idea is to maximize pairwise distances among all basis actions. 
The optimization in \eqref{opt: max diff} is equivalent to a difference-of-convex programming \cite{horst1999dc}. To see this equivalence, notice that \(\underset{(i, j)\in\mathbb{P}}{\min} \psi(\hat{x}^i-\hat{x}^j) = \sum_{(i, j)\in\mathbb{P}} \psi\left(\hat{x}^i-\hat{x}^j\right)-\underset{(i, j)\in\mathbb{P}}{\mbox{max}} \sum_{(p, q)\in\mathbb{P}\setminus \{(i, j)\}} \psi\left(\hat{x}^p-\hat{x}^q\right)\).
Hence, by adding a slack variable \(s\), we can rewrite optimization~\eqref{opt: max diff} as follows
\begin{equation}\label{opt: dcp}
    \begin{array}{ll}
       \underset{s, \{\hat{x}^k\}_{k=1}^N}{\mbox{maximize}}  &  -s+\sum\limits_{(i, j)\in\mathbb{P}} \psi(\hat{x}^i-\hat{x}^j) \\
      \mbox{subject to}  &  \hat{x}^k\in\mathbb{X}, \enskip k=1, 2, \ldots, N,\\
      & \sum\limits_{(p, q)\in\mathbb{P}\setminus \{(i, j)\}} \psi\left(\hat{x}^p-\hat{x}^q\right)\leq s, \enskip (i, j)\in \mathbb{P}.
    \end{array}
\end{equation}
Therefore, the optimization in \eqref{opt: dcp} is a difference-of-convex program: it contains convex constraints, but its objective function is the difference between two convex functions \cite{horst1999dc}. A popular solution method for difference-of-convex programs is the \emph{convex-concave procedure}, which guarantees global convergence to a stationary point and provides locally optimal solutions  in practice  \cite{lipp2016variations}.  

\section{Numerical Experiments}
We demonstrate the active learning framework proposed in Section~\ref{sec: active learning} via a multi-player traffic assignment game defined over the Sioux Falls network, a common benchmark network model. All network data used in this section come from the \emph{Transportation Networks for Research} \cite{transpnet}.

\subsection{Multi-player traffic assignment game}
\label{subsec: SF net}

\begin{wrapfigure}{r}{0.5\linewidth} % 'r' = right, 'l' = left
    \centering
    \includegraphics[width=\linewidth, trim={9cm 2cm 8cm 1cm}, clip]{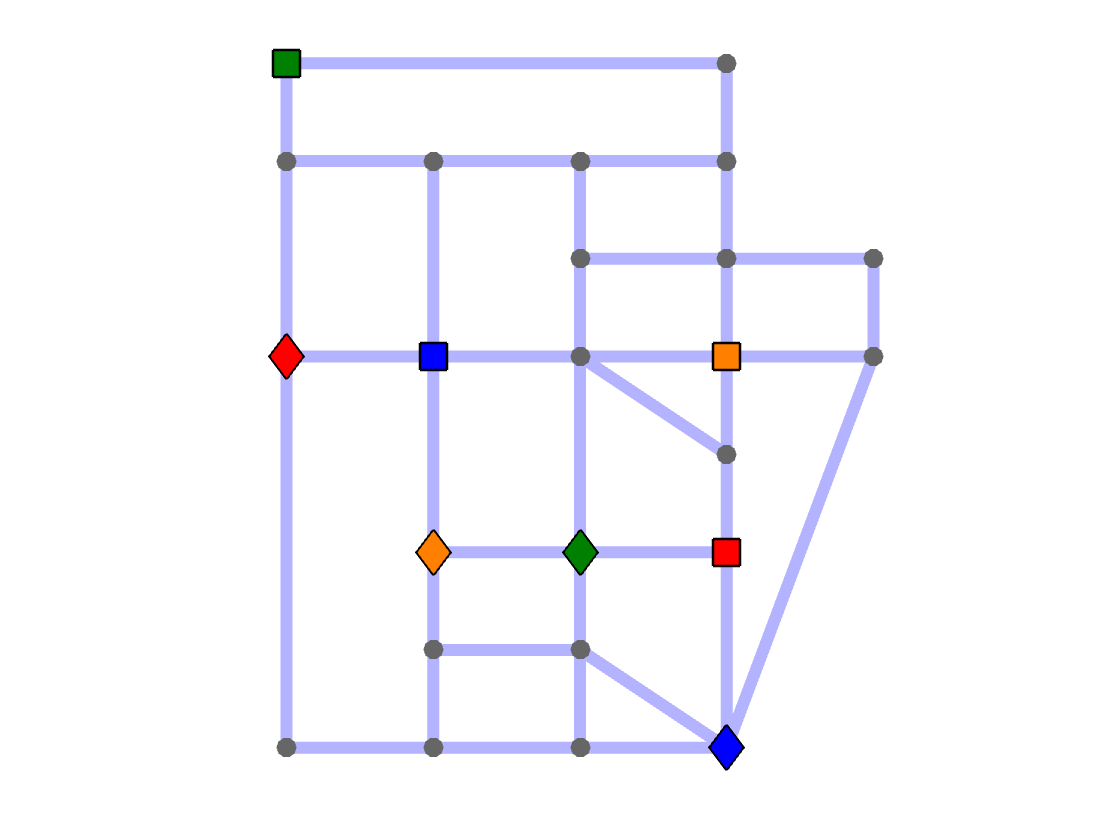}

    \caption{Origin and destination nodes of four players in the Sioux Falls network. Square and diamond markers illustrate origin and destination nodes, respectively. Blue, green, red, and orange markers correspond to player 1 to player 4, respectively.}
    \label{fig: SFnet}
\end{wrapfigure}
We consider a multi-player traffic assignment game in the Sioux Falls network, which contains 24 nodes and 76 links. Each player's objective is to assign traffic flows to the links in the network, such that the assignment serves a given amount of demand from some origin node to a destination node. Fig~\ref{fig: SFnet} illustrate the structure of the Sioux Falls network, as well as the origin and destination nodes of four players. Each player aims to optimize its cost function, which is different from other player's cost functions. The value of this function depends on the traffic assignment of other players, since they create congestion on the links and, as a result, may affect the value of each player's cost function.

To  model this problem, we let \(\mathcal{N}=\{1, 2, \ldots, n_n\}\) denote the set of nodes, and \(\mathcal{L}=\{1, 2, \ldots, n_l\}\) denote the set of links. Each link is an ordered pair of distinct nodes, where the first node and second node are the ``tail" and ``head" of the link, respectively. We represent the topology of the hybrid air-ground network using the node-link incidence matrix \(E\in\mathbb{R}^{n_n\times n_l}\). The entry \([E]_{ik}\) equals \(1\) if node \(i\) is the tail of link \(k\), \(-1\) if node \(i\) is the head of link \(k\), and zero otherwise. We let \(a, b\in\mathbb{R}^{n_l}_{>0}\) denote the cost parameters associated with player \(i\), where the \(j\)-th element of \(a\) and \(b\)---denoted by \([a]_j\) and \([b]_j\), respectively---denote the nominal travel time and nominal traffic volume of link \(j\).

We consider \(m\in\mathbb{N}\) players of traffic routed through this network. Player \(i\)'s goal is transport a given amount of traffic demands from its origin node \(o_i\in\mathbb{N}\) to its destination node \(d_i\in\mathbb{N}\). We let \(\rho_i\) denote the amount of traffic demand for the \(i\)-th player, and \(s_i\in\mathbb{R}^{n_n}\) denote the \emph{demand vector} for player \(i\), such that the \(j\)-th element of \(s_i\) is \(\rho_i\) if \(j=o_i\), \(-\rho_i\) if \(j=d_i\), and zero otherwise. We let 
\begin{equation}\label{eqn: constraint set}
 \mathbb{X}_i\coloneqq \{x_i\in\mathbb{R}^{n_l}| Ex_i=s_i, 0_{n_l}\leq x_i\leq c, a^\top x\leq \gamma_i\}
\end{equation}
denote the feasible set of flow vectors for player \(i\), where \(c\in\mathbb{R}^{n_l}\) is the capacity vector, \(\gamma_i\in\mathbb{R}_{\geq  0}\) is an upper bound on player \(i\)'s cost when the travel time on each link equals its nominal value. Furthermore, each player evaluates the cost of using a link based on a BPR-type cost function \cite{patriksson2015traffic}.  Let \(x_i\) and \(x_{-i}\) denote the flow vectors of player \(i\) and players other than \(i\), respectively. We let \(\ell_j:\mathbb{X}_i\times \mathbb{X}_{-i}\to\mathbb{R}\) denote the BPR-type cost function for link \(i\in\mathcal{L}\), such that
\begin{equation}\label{eqn: BPR cost}
   \textstyle \ell_i(x_i, x_{-i})\coloneqq [a_i]_j  \left(1+\lambda \left(\frac{\sum_{i=1}^m \left[x_i\right]_j}{[b_i]_j}\right)^\nu\right)
\end{equation}
where \([a_i]_j\) and \([b_i]_j\) denote the \(j\)-th entry of vector \(a_i\) and \(b_i\), respectively; \(\lambda\in\mathbb{R}_{>0}\), \(\nu\in\mathbb{N}\) are model parameters of the BPR model. The value of \(\left[\sum_{i=1}^m x_i\right]_j\) denotes the total amount of traffic on link \(j\) assigned by all players. A typical choice of parameter value is \(\lambda=0.15, \nu=4\). This function models how the cost of using a link increases with the total amount of traffic contributed by all players. 

Based on this link cost function, we let \(f_i:\mathbb{X}_i\times \mathbb{X}_{-i}\to\mathbb{R}\) be as follows
\begin{equation}
   \textstyle f_i(x_i, x_{-i})=\frac{1}{\delta_i}\sum_{j=1}^{n_l} [x_i]_j \ell_j(x_i, x_{-i}) 
\end{equation}
where \(\delta_i\coloneqq \min_{x_i\in\mathbb{X}_i} a^\top x_i\) is the nominal cost value for player \(i\) when there is no traffic in the network. Intuitively, function \(f_i\) evaluates a weighted sum of the flow assigned by player \(i\), where the flow on link \(j\) is weighted by the BPR-type link cost of link \(j\). Notice that although \(f_i(x_i, x_{-i})\) can be an odd order polynomial of the elements of \(x_i\), it is a convex function for any \(x_i\in\mathbb{R}_{\geq 0}^{n_l}\). One can verify this fact by checking that the Hessian matrix \(\nabla^2_{x_i} f_i(x_i, x_{-i})\) is positive semidefinite if \(x_j\in\mathbb{R}_{\geq 0}^{n_l}\) for all \(j=1, 2, \ldots, m\).   

\subsection{Learning traffic assignments from correlated regrets}
We demonstrate the proposed active learning framework using a two-player (player 1 and player 2) and a four-player case of the multi-player traffic assignment game illustrated by Fig.~\ref{fig: SFnet}. For all the parameters used in Section~\ref{subsec: SF net} (including the node-link incidence matrix, traffic demands for each origin-destination pair, as well as the parameters of the BPR-type cost function in \eqref{eqn: BPR cost}), we set their values based on the dataset developed by the Transportation Networks for Research Core Team \cite{transpnet}. We first identify a set of basis actions using the heuristic method proposed in Section~\ref{subsec: max diff}, where we choose function \(\psi\) as the \(\ell_1\)-norm function and set \(N=10\) and \(N=15\) for the two-player and four-player case, respectively. We then apply Bayesian optimization method discussed in Section~\ref{subsec: Bayesian opt} to learn an approximate correlated equilibrium of the multi-player traffic assignment game. In the Bayesian optimization method, we choose the \emph{Mat\'ern kernel} with \(\nu=5/2\), and the \emph{Expected Improvement} as the acquisition function, and consider the explicit constraints of \(w\in\Delta_N\) in the acquisition optimization step. We base our implementation of the Bayesian optimization method on the open-source toolbox BoTorch \cite{balandat2020botorch}. 

Figure~\ref{fig: regret} illustrate the convergence of players' correlated regrets---which is defined by \eqref{eqn: approx regret}---when applying the Bayesian optimization method proposed in Section~\ref{subsec: Bayesian opt}. As a benchmark to the heuristic for identifying basis actions introduce in Section~\ref{subsec: max diff}, we also test the proposed Bayesian optimization method based on random basis actions. We generate each random joint action by optimizing a random linear function---whose coefficients are uniformly sampled from the interval \([0, 1]\)---over the joint action set \(\mathbb{X}\). The results in Figure~\ref{fig: regret} show that the proposed Bayesian optimization method consistently reduces the payers correlated regrets over relatively few iterations. This showcases the advantage of Bayesian optimization over other black-box optimization methods: it is effective in reducing the number of queries of the objective function value (correlated regrets in this case). Furthermore, the results in Fig.~\ref{fig: regret} also show that using the basis actions identified by heuristic in Section~\ref{subsec: max diff} outperforms those based on random basis actions, by approximately one order of magnitude in the four-player game. This improvement indicates that the proposed heuristic leads to a diverse set of actions that better captures the overall structure of joint action set \(\mathbb{X}\).

\begin{figure}[t]
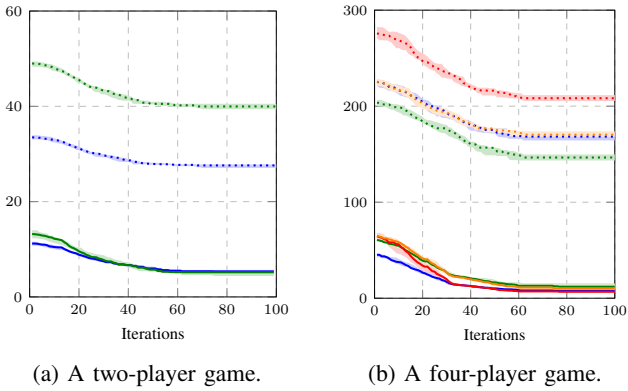

  \centering
  \begin{subfigure}[t]{0.48\linewidth}
    \centering
    \resizebox{\textwidth}{!}{\input{numerical/figs/two_player_game}}
    \caption{A two-player game.}
    \label{fig: two players}
  \end{subfigure}\hfill
  \begin{subfigure}[t]{0.48\linewidth}
    \centering
   \resizebox{\textwidth}{!}{\input{numerical/figs/four_player_game}}
    \caption{A four-player game.}
    \label{fig: four players}
  \end{subfigure}

  \caption{Convergence of individual player's correlated regret in two-player and four-player traffic assignment games. Blue, green, red, and orange curves correspond to players 1–4. Each curve shows the regret averaged over 10 Monte Carlo simulations, with shaded regions indicating the corresponding ranges. The dotted and solid curves correspond to results based on random basis actions and on heuristic-based basis actions (Section~\ref{subsec: max diff}), respectively.}
    \label{fig: regret}
\end{figure}

%Figure~\ref{fig:test_normalized_comparison} tests the third figure import.
%\section{Numerical Results}

\section{Conclusion}
We proposed an active learning framework for learning approximate correlated equilibrium in convex games by querying players' regrets. This framework combine Bayesian optimization with a heuristic that select a finite set of joint actions in an infinite joint action space by maximizing pairwise differences. This framework leads to efficient learning in multi-user traffic assignment games.

However, the current work still has several limitations. For example, it is unclear how the proposed heuristic method for selecting joint actions scale in high-dimensional dynamic games. It is also unclear how to theoretically bound the approximation error of the learned approximate correlated equilibrium. In our future work, we aim to address these limitations, as well as consider extending the current work to decentralized learning of multiple coordinators.  

%\addtolength{\textheight}{-12cm}   % This command serves to balance the column lengths
                                  % on the last page of the document manually. It shortens
                                  % the textheight of the last page by a suitable amount.
                                  % This command does not take effect until the next page
                                  % so it should come on the page before the last. Make
                                  % sure that you do not shorten the textheight too much.

%%%%%%%%%%%%%%%%%%%%%%%%%%%%%%%%%%%%%%%%%%%%%%%%%%%%%%%%%%%%%%%%%%%%%%%%%%%%%%%%

%\section*{APPENDIX}
%\section*{ACKNOWLEDGMENT} Some acknowledgements.

\bibliographystyle{IEEEtran}
\bibliography{IEEEabrv,reference}

\begin{thebibliography}{10}
\providecommand{\url}[1]{#1}
\csname url@rmstyle\endcsname
\providecommand{\newblock}{\relax}
\providecommand{\bibinfo}[2]{#2}
\providecommand\BIBentrySTDinterwordspacing{\spaceskip=0pt\relax}
\providecommand\BIBentryALTinterwordstretchfactor{4}
\providecommand\BIBentryALTinterwordspacing{\spaceskip=\fontdimen2\font plus
\BIBentryALTinterwordstretchfactor\fontdimen3\font minus \fontdimen4\font\relax}
\providecommand\BIBforeignlanguage[2]{{%
\expandafter\ifx\csname l@#1\endcsname\relax
\typeout{** WARNING: IEEEtran.bst: No hyphenation pattern has been}%
\typeout{** loaded for the language `#1'. Using the pattern for}%
\typeout{** the default language instead.}%
\else
\language=\csname l@#1\endcsname
\fi
#2}}

\bibitem{aumann1974subjectivity}
R.~J. Aumann, ``Subjectivity and correlation in randomized strategies,'' \emph{J. Math. Econ.}, vol.~1, no.~1, pp. 67--96, 1974.

\bibitem{aumann1987correlated}
------, ``Correlated equilibrium as an expression of bayesian rationality,'' \emph{Econometrica}, pp. 1--18, 1987.

\bibitem{hart1989existence}
S.~Hart and D.~Schmeidler, ``Existence of correlated equilibria,'' \emph{Math. Oper. Res.}, vol.~14, no.~1, pp. 18--25, 1989.

\bibitem{cavaliere2001coordination}
A.~Cavaliere, ``Coordination and the provision of discrete public goods by correlated equilibria,'' \emph{J. Public Econ. Theory}, vol.~3, no.~3, pp. 235--255, 2001.

\bibitem{babichenko2014simple}
Y.~Babichenko, S.~Barman, and R.~Peretz, ``Simple approximate equilibria in large games,'' in \emph{Proc. 15th ACM Conf. Econ. Comput.}, 2014, pp. 753--770.

\bibitem{duffy2017coordination}
J.~Duffy, E.~K. Lai, and W.~Lim, ``Coordination via correlation: An experimental study,'' \emph{Econ. Theory}, vol.~64, no.~2, pp. 265--304, 2017.

\bibitem{wu2012cooperative}
D.~Wu, Y.~Cai, L.~Zhou, Z.~Zheng, and B.~Zheng, ``Cooperative strategies for energy-aware ad hoc networks: A correlated-equilibrium game-theoretical approach,'' \emph{IEEE Trans. Veh. Technol.}, vol.~62, no.~5, pp. 2303--2314, 2012.

\bibitem{papadimitriou2008computing}
C.~H. Papadimitriou and T.~Roughgarden, ``Computing correlated equilibria in multi-player games,'' \emph{J. ACM}, vol.~55, no.~3, pp. 1--29, 2008.

\bibitem{kamisetty2011approximating}
H.~Kamisetty, E.~P. Xing, and C.~J. Langmead, ``Approximating correlated equilibria using relaxations on the marginal polytope.'' in \emph{ICML}, 2011, pp. 1153--1160.

\bibitem{im2024coordination}
J.~Im, Y.~Yu, D.~Fridovich-Keil, and U.~Topcu, ``Coordination in noncooperative multiplayer matrix games via reduced rank correlated equilibria,'' \emph{IEEE Control Syst. Lett.}, vol.~8, pp. 1637--1642, 2024.

\bibitem{hart2000simple}
S.~Hart and A.~Mas-Colell, ``A simple adaptive procedure leading to correlated equilibrium,'' \emph{Econometrica}, vol.~68, no.~5, pp. 1127--1150, 2000.

\bibitem{foster1997calibrated}
D.~P. Foster and R.~V. Vohra, ``Calibrated learning and correlated equilibrium,'' \emph{Games Econ. Behav.}, vol.~21, no. 589, pp. 40--55, 1997.

\bibitem{borowski2014learning}
H.~P. Borowski, J.~R. Marden, and J.~S. Shamma, ``Learning efficient correlated equilibria,'' in \emph{Proc. 53rd IEEE Conf. Decis. Control}.\hskip 1em plus 0.5em minus 0.4em\relax IEEE, 2014, pp. 6836--6841.

\bibitem{marden2017selecting}
J.~R. Marden, ``Selecting efficient correlated equilibria through distributed learning,'' \emph{Games Econ. Behav.}, vol. 106, pp. 114--133, 2017.

\bibitem{arifovic2019learning}
J.~Arifovic, J.~F. Boitnott, and J.~Duffy, ``Learning correlated equilibria: An evolutionary approach,'' \emph{J. Econ. Behav. Organ.}, vol. 157, pp. 171--190, 2019.

\bibitem{shahriari2015taking}
B.~Shahriari, K.~Swersky, Z.~Wang, R.~P. Adams, and N.~De~Freitas, ``Taking the human out of the loop: A review of {B}ayesian optimization,'' \emph{Proceedings of the IEEE}, vol. 104, no.~1, pp. 148--175, 2015.

\bibitem{williams1995gaussian}
C.~Williams and C.~Rasmussen, ``Gaussian processes for regression,'' \emph{Adv. Neural Inform. Process. Syst.}, vol.~8, 1995.

\bibitem{bull2011convergence}
A.~D. Bull, ``Convergence rates of efficient global optimization algorithms.'' \emph{J. Mach. Learn. Res.}, vol.~12, no.~10, 2011.

\bibitem{horst1999dc}
R.~Horst and N.~V. Thoai, ``{DC} programming: overview,'' \emph{J. Optim. Theory Appl.}, vol. 103, pp. 1--43, 1999.

\bibitem{lipp2016variations}
T.~Lipp and S.~Boyd, ``Variations and extension of the convex--concave procedure,'' \emph{Optim. Eng.}, vol.~17, pp. 263--287, 2016.

\bibitem{transpnet}
{Transportation Networks for Research Core Team}, ``Transportation networks for research,'' \url{https://github.com/bstabler/TransportationNetworks}, 2016, accessed: 2025-06-01.

\bibitem{patriksson2015traffic}
M.~Patriksson, \emph{The traffic assignment problem: models and methods}.\hskip 1em plus 0.5em minus 0.4em\relax Courier {D}over {P}ublications, 2015.

\bibitem{balandat2020botorch}
M.~Balandat, B.~Karrer, D.~Jiang, S.~Daulton, B.~Letham, A.~G. Wilson, and E.~Bakshy, ``Botorch: A framework for efficient monte-carlo bayesian optimization,'' \emph{Adv. Neural Inform. Process. Syst.}, vol.~33, pp. 21\,524--21\,538, 2020.

\end{thebibliography}

\end{document}